
\documentclass[letterpaper, 10pt, conference,onecolumn]{ieeeconf}  

\IEEEoverridecommandlockouts                              

\overrideIEEEmargins                                      



\usepackage{amsmath} 
\usepackage{amssymb}  
\usepackage{graphicx} 
\usepackage{mathptmx} 
\usepackage{times} 
\usepackage{datetime}
\newtheorem{lem}{Lemma}
\newtheorem{thm}{Theorem}

\def\<{\leqslant}           
\def\>{\geqslant}           

\def\d{\partial}

\def\Re{\mathrm{Re}}   

\def\col{\mathrm{vec}}   
\def\cH{\mathcal{H}}   
\def\mA{\mathbb{A}}    
\def\mR{\mathbb{R}}    

\def\Tr{\mathrm{Tr}}       
\def\rT{\mathrm{T}}        
\def\sym{\mathrm{sym}}        
\def\asym{\mathrm{asym}}        

\def\bE{\mathbf{E}}    


\def\[[[{[\![\![}
\def\]]]{]\!]\!]}

\def\bra{{\langle}}
\def\ket{{\rangle}}

\def\Bra{\big\langle}
\def\Ket{\big\rangle}


\def\rd{{\rm d}}        



\def\bJ{\mathbf{J}}

\def\x{\times}
\def\ox{\otimes}
\def\op{\oplus}

\def\cZ{{\mathcal Z}}

\def\cF{\mathcal{F}}
\def\cW{\mathcal{W}}

\def\cX{\mathcal{X}}

\def\cK{\mathcal{K}}

\def\cD{\mathcal{D}}

\def\cC{\mathcal{C}}

\def\cA{\mathcal{A}}
\def\cB{{\cal B}}
\def\cE{{\mathcal E}}

\def\mU{\mathbb{U}}

\def\mH{\mathbb{H}}
\def\mS{\mathbb{S}}

\def\eps{\epsilon}

\def\ups{\upsilon}


\def\argmin{\mathop{\mathrm{arg\, min}}}    
\def\Argmin{\mathop{\mathrm{Arg\, min}}}    
\title{\bf A Gradient Descent Approach to Optimal Coherent Quantum LQG Controller Design}

\author{Arash Kh. Sichani, \qquad Igor G. Vladimirov, \qquad Ian R. Petersen
\thanks{This work is supported by the Australian Research Council. The authors are with UNSW Canberra, ACT 2600, Australia. E-mail: {\tt arash\_kho@hotmail.com, igor.g.vladimirov@gmail.com, i.r.petersen@gmail.com}.}
}


\pagestyle{plain}
\begin{document}

\maketitle

\thispagestyle{empty}

\begin{abstract}
    This paper is concerned with the Coherent Quantum Linear Quadratic Gaussian (CQLQG) control problem of finding a stabilizing measurement-free quantum controller for a quantum plant so as to minimize an infinite-horizon mean square performance index for the fully quantum closed-loop system. In comparison with the observation-actuation structure of classical controllers, the coherent quantum feedback is less invasive to the quantum dynamics and quantum information. Both the plant and the controller are open quantum systems whose dynamic variables satisfy the canonical commutation relations (CCRs) of a quantum harmonic oscillator and are governed by linear quantum stochastic differential equations (QSDEs). In order to correspond to such oscillators, these QSDEs must satisfy physical realizability (PR) conditions, which are organised as quadratic constraints on the controller matrices and reflect the preservation of CCRs in time. The CQLQG problem is a constrained optimization problem for the steady-state quantum covariance matrix of the plant-controller system satisfying an algebraic Lyapunov equation. We propose a gradient descent algorithm equipped with adaptive stepsize selection for the numerical solution of the problem. The algorithm finds a local minimum of the LQG cost over the parameters of the Hamiltonian and coupling operators of a stabilizing PR quantum controller, thus taking the PR constraints into account. A convergence analysis of the proposed algorithm is presented. A numerical example of a locally optimal CQLQG controller design is provided to demonstrate the algorithm performance.
\end{abstract}

\section{INTRODUCTION}

    Coherent quantum feedback control \cite{L_2000,M_2008} is a relatively novel quantum control paradigm which is aimed at achieving given performance specifications for quantum systems, such as internal stability and optimization of a cost functional. Such systems arise naturally in quantum physics \cite{H_2001} and its engineering applications (for example, nanotechnology and quantum optics \cite{GZ_2004}). The dynamic variables of quantum systems are (usually noncommuting) operators on an underlying Hilbert space which evolve according to the laws of quantum mechanics \cite{M_1998}. The latter make the quantum dynamics particularly sensitive to interaction with classical devices in the course of quantum measurement, as reflected in the projection postulate of quantum mechanics. In order to overcome this issue, coherent quantum control employs the idea of direct interconnection of quantum systems to be controlled (quantum plants) with other quantum systems playing the role of controllers, possibly mediated by light fields.  Unlike the traditional observation-actuation control loop, this fully quantum measurement-free feedback avoids the loss of quantum information as a result of its conversion to classical signals.

    Quantum-optical components, such as optical cavities, beam splitters and phase shifters, make it possible to implement coherent quantum feedback governed by linear quantum stochastic differential equations (QSDEs) \cite{P_1992,P_2010}, provided the latter are physically realizable (PR) as open quantum harmonic oscillators \cite{EB_2005,GZ_2004}. The resulting PR conditions \cite{JNP_2008,SP_2009,SP_2012} are organized as quadratic constraints on the coefficients of the QSDEs. The PR constraints for the state-space matrices of a coherent quantum controller complicate the solution of quantum counterparts to the classical $\cH_{\infty}$ and Linear Quadratic Gaussian (LQG) control problems.

    The Coherent Quantum LQG (CQLQG) control problem \cite{NJP_2009} seeks for a stabilizing PR quantum controller so as to minimize a mean square performance criterion for the fully quantum closed-loop system. A numerical procedure for finding \emph{suboptimal} controllers for this problem was proposed in \cite{NJP_2009}, and algebraic equations for the \textit{optimal} CQLQG controller were obtained in \cite{VP_2013a}. Despite the previous results, the CQLQG control problem does not lend itself to an ``elegant'' solution (for example, in the form of decoupled Riccati equations as in the classical case \cite{KS_1972}) and remains a subject of research. Since the main difficulties are caused by the coupling of the equations due to the PR constraints, a conversion of the CQLQG control problem to an \emph{unconstrained} problem by using Lagrange multipliers was considered in \cite{VP_2013b} for a related coherent quantum filtering problem which is a simplified feedback-free version of the CQLQG control problem.

    In the present paper, we develop an algorithm for the numerical solution of the CQLQG control problem by using the gradient descent method and the Hamiltonian parameterization of PR quantum controllers \cite{VP_2013a}. The latter is a different technique to handle the PR constraints  by reformulating the CQLQG control problem in an unconstrained fashion.
    More precisely, the optimal solution is sought in the class of stabilizing PR controllers whose state-space matrices are parameterized in terms of the free Hamiltonian and coupling operators of an open quantum harmonic oscillator \cite{EB_2005}. We obtain ordinary differential equations (ODEs) for the gradient descent in the Hilbert space of matrix-valued parameters. For this purpose, Fr\'{e}chet differentiation  is used together with related algebraic techniques \cite{BH_1998,M_1988,SIG_1998,VP_2010b,VP_2013a} to employ the analytic structure of the LQG cost as a composite function of the matrix-valued variables, involving Lyapunov equations.
    The advantage of the proposed
    approach is that, at intermediate steps, it produces stabilizing PR quantum controllers which can be used as gradually improving suboptimal solutions of the CQLQG control problem, and a locally optimal solution (if it exists) is achieved asymptotically by moving along negative gradient directions with a suitable choice of stepsizes. To this end, we provide an algorithm for adaptive selection of the stepsize for each iteration based on the second-order G\^{a}teaux derivative of the LQG cost along the gradient. However, the proposed gradient descent algorithm for the CQLQG control problem requires  for its initialization a \emph{stabilizing} PR quantum controller. Finding such a controller for an arbitrary  given quantum plant is a nontrivial open problem. Because of the lack of a systematic  solution for this \emph{quantum stabilization problem} at the moment,  the current version of the algorithm is initialized at a stabilizing PR quantum controller obtained by random search in the space defined by the Hamiltonian parameterization of PR controllers. Although a random search for an admissible starting point is acceptable for low-dimensional problems,  the development of a more systematic solution for this issue is a subject of future research.

    The paper is organised as follows.  Section~\ref{sec:not} outlines the notation used in the paper. Sections~\ref{sec:plant} and \ref{sec:controller} specify the quantum plants and coherent quantum controllers being considered. Section \ref{sec:PR} revisits PR conditions for linear quantum systems. Section~\ref{sec:problem} formulates the CQLQG control problem. Section~\ref{sec:Grad_Dec} describes a gradient descent system for finding local minima in the control problem. Section~\ref{sec:Grad_Algo} describes an algorithmic implementation of the gradient descent method.
    Section~\ref{subsec:con_analys} discusses convergence of the algorithm.
    Section~\ref{sec:example} provides a numerical example of designing a locally optimal CQLQG controller. Section~\ref{sec:Conclusion} gives concluding remarks. Appendices~\ref{app:FreDeriv} and \ref{app:sec_Gateaux} provide a subsidiary material on the differentiation of the LQG cost.

\section{NOTATION}\label{sec:not}

Vectors are assumed to be organized as columns unless specified otherwise, and the transpose $(\cdot)^{\rT}$ acts on matrices with operator-valued entries as if the latter were scalars. For a vector $X$ of operators $X_1, \ldots, X_r$ and a vector $Y$ of operators $Y_1, \ldots, Y_s$, the commutator matrix
$
    [X,Y^{\rT}]
    :=
    XY^{\rT} - (YX^{\rT})^{\rT}
$
is an $(r\x s)$-matrix whose $(j,k)$th entry is the commutator
$
    [X_j,Y_k]
    :=
    X_jY_k - Y_kX_j
$ of the operators $X_j$ and $Y_k$.
Furthermore, $(\cdot)^{\dagger}:= ((\cdot)^{\#})^{\rT}$ denotes the transpose of the entry-wise operator adjoint $(\cdot)^{\#}$. When it is applied to complex matrices,  $(\cdot)^{\dagger}$ reduces to the complex conjugate transpose  $(\cdot)^*:= (\overline{(\cdot)})^{\rT}$. Denoted by $\sym(\cdot):= \frac{(\cdot)+(\cdot)^\rT}{2}$ and $\asym(\cdot):= \frac{(\cdot)-(\cdot)^\rT}{2}$ are the symmetrizer and antisymmetrizer of matrices.  Also, we denote by $\mS_r$, $\mA_r$
 and
$
    \mH_r
    :=
    \mS_r + i \mA_r
$
the subspaces of real symmetric, real antisymmetric and complex Hermitian  matrices of order $r$, respectively, with $i:= \sqrt{-1}$ the imaginary unit. Furthermore, $I_r$ denotes the identity matrix of order $r$, positive (semi-) definiteness of matrices is denoted by ($\succcurlyeq$) $\succ$, and $\ox$ is the tensor product of spaces or operators (in particular, the Kronecker product of matrices).
The adjoints and  self-adjointness of linear operators acting on matrices is understood in the sense of the Frobenius inner product
$
    \bra M,N\ket
    :=
    \Tr(M^*N)
$ of real or complex matrices, with the corresponding Frobenius norm $\|M\|:= \sqrt{\bra M,M\ket}$ which is the standard Euclidean norm $|\cdot|$ for vectors.
Also, $\bE X := \Tr(\rho X)$ denotes the quantum expectation of a quantum variable $X$ (or a vector of such variables) over a density operator $\rho$ which specifies the underlying quantum state.

\section{QUANTUM PLANT}\label{sec:plant}

The quantum plant under consideration is an open quantum stochastic system which is coupled to another such system (playing  the role of a controller), with the dynamics of both systems being affected by the environment.
Both the plant and the controller are assumed to satisfy the physical realizability (PR) conditions \cite{JNP_2008,NJP_2009,SP_2009} which will be described in Section~\ref{sec:PR}.
The  plant has $n$ dynamic variables $x_1(t), \ldots, x_n(t)$ (with $n$ even) which are self-adjoint operators on a Hilbert space specified below. With the time arguments being omitted for brevity, the evolution of the plant state vector $x:= (x_k)_{1\< k \< n}$ and its contribution to a $p_1$-dimensional output of the plant  $y:= (y_k)_{1\< k\< p_1}$ (also with self-adjoint operator-valued entries) are governed by  QSDEs
\begin{equation}
\label{x_y}
    \rd x
    =
    A x\rd t  +  B \rd w + E \rd \eta,
    \qquad
    \rd y
    =
    C x\rd t  +  D \rd w.
\end{equation}
Here,
$
    A\in \mR^{n\x n}$,
$
    B\in \mR^{n\x m_1}$,
$
    C\in \mR^{p_1\x n}$,
$
    D\in \mR^{p_1\x m_1}$,
$
    E\in \mR^{n\x p_2}
$
are given constant matrices. Also,
\begin{equation}
\label{z}
    z
    :=
     Cx
\end{equation}
is a ``signal part'' of the plant output $y$, and $\eta$ is a $p_2$-dimensional output of the controller to be described in Section~\ref{sec:controller}. The external noise acting on the plant is represented by a quantum Wiener process $w:= (w_k)_{1\< k\< m_1}$ whose entries are self-adjoint operators on  a boson Fock space $\cF_1$ \cite{P_1992} with the quantum It\^{o} table
$        \rd w \rd w^{\rT}
        =
    \Omega_1
    \rd t$,
    where the matrix $\Omega_1 \in \mH_{m_1}$ is given by
$
    \Omega_1
    :=
    I_{m_1}+iJ_1
    \succcurlyeq 0
$.
Here,  the matrix $J_1 \in \mA_{m_1}$ specifies the CCRs between the entries of the plant noise $w$ as
$    [\rd w, \rd w^{\rT}]
    =
    2iJ_1\rd t
$
and (assuming that  the dimension $m_1$ is even) is given by
$    J_1 := I_{m_1/2} \ox \bJ$,
    with
    $ \bJ
    :=
    {\begin{bmatrix}
        0 & 1\\
        -1 & 0
    \end{bmatrix}}
$.

\section{QUANTUM CONTROLLER}\label{sec:controller}

Consider an interconnection of the plant (\ref{x_y})  with a coherent (that is, measurement-free) quantum controller. The latter  is also an open quantum system with an $n$-dimensional state vector $\xi:= (\xi_k)_{1\< k\< n}$ of self-adjoint operators on another Hilbert space, which also evolve in time. The assumption that the controller has the same number of dynamic variables as the plant is adopted from the classical LQG control theory. The controller interacts with the plant and the environment according to the QSDEs
\begin{equation}
\label{xi_eta}
    \rd \xi
     =
    a\xi\rd t + b \rd \omega + e\rd y,
    \qquad
    \rd \eta
     =
    c\xi \rd t + d\rd \omega.
\end{equation}
Here,
$
    a \in \mR^{n\x n}
$,
$
    b\in \mR^{n\x m_2}
$,
$
    c\in \mR^{p_2\x n}
$,
$
    d\in \mR^{p_2\x m_2}
$,
$
    e\in \mR^{n\x p_1}
$
are also constant matrices.
Similarly to (\ref{z}), the $p_2$-dimensional process
\begin{equation}
\label{zeta}
    \zeta
     :=
    c\xi
\end{equation}
is the signal part of the controller output $\eta$. The process $\omega$ in (\ref{xi_eta})  is a quantum noise which effects the controller and is an $m_2$-dimensional quantum Wiener process (with $m_2$ even) on another boson Fock space $\cF_2$ with the quantum Ito table
$    \rd \omega \rd \omega^{\rT}
    =
    \Omega_2\rd t$,
    where the matrix $\Omega_2 \in \mH_{m_2}$ is given by
$
    \Omega_2
    :=
    I_{m_2}+iJ_2
    \succcurlyeq 0$.
Here, the matrix $J_2 \in \mA_{m_2}$ specifies the CCRs between the entries  of the controller noise $\omega$ as
$    [\rd \omega, \rd \omega^{\rT}]
    =
    2iJ_2\rd t
$
and is given by $
    J_2 := I_{m_2/2} \ox \bJ
$.
The plant and controller noises $w$ and $\omega$ act on different boson Fock spaces $\cF_1$ and $\cF_2$, respectively, and hence, commute with each other. Therefore, the combined quantum Wiener process
\begin{equation}
\label{cW}
    \cW
    :=
    {\begin{bmatrix}
        w\\
        \omega
    \end{bmatrix}}
\end{equation}
of dimension $m:= m_1+m_2$ acts on the tensor product space $\cF_1\ox \cF_2$ and has a block diagonal CCR matrix $J$:
\begin{equation}
\label{WCCR}
    [\rd \cW, \rd \cW^{\rT}]
    =
    2iJ\rd t,
    \qquad
    J
    :=
    {\begin{bmatrix}
        J_1 & 0\\
        0 & J_2
    \end{bmatrix}}.
\end{equation}
Furthermore, the external boson fields are assumed to be in the product vacuum state $\ups := \ups_1 \ox \ups_2$, and hence, are uncorrelated. The resulting quantum Ito table of the combined Wiener process $\cW$ in (\ref{cW}) is
\begin{equation}
\label{WW}
    \rd \cW \rd \cW^{\rT}
    =
    \Omega
    \rd t,
    \qquad
    \Omega
    :=
    I_m + iJ
    =
    \Omega^*\succcurlyeq 0.
\end{equation}
In the controller dynamics (\ref{xi_eta}), the matrix $b$ is the noise gain matrix, while $e$ plays the role of the observation gain matrix, although $y$ is not an observation signal in the classical control theoretic sense. Accordingly, the process $\zeta $ in (\ref{zeta}) corresponds to the classical actuator signal.
Now, the combined set of QSDEs
 (\ref{x_y}), (\ref{xi_eta}) describes the fully quantum closed-loop system shown in Fig.~\ref{fig:system}.
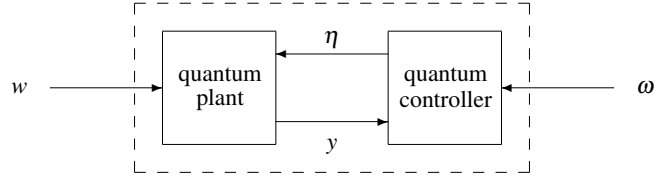
\begin{figure}[htpb]
\begin{center}
\unitlength=1.5mm
\linethickness{0.2pt}
\begin{picture}(50.00,26.00)
    \put(7.5,7.5){\dashbox(35,15)[cc]{}}
    \put(10.5,10){\makebox(10,12)[cc]{{\small quantum} }}
    \put(10.5,8){\makebox(10,12)[cc]{{\small plant} }}
    \put(10,10){\framebox(10,10)[cc]{ }}

    \put(30.5,10){\makebox(10,12)[cc]{{\small quantum} }}
    \put(30.5,8){\makebox(10,12)[cc]{{\small controller} }}
    \put(30,10){\framebox(10,10)[cc]{}}
    \put(0,15){\vector(1,0){10}}
    \put(50,15){\vector(-1,0){10}}
    \put(30,18){\vector(-1,0){10}}
    \put(20,12){\vector(1,0){10}}
    \put(-2,15){\makebox(0,0)[rc]{{\small$w$}}}
    \put(25,18.5){\makebox(0,0)[cb]{{\small$\eta$}}}
    \put(52,15){\makebox(0,0)[lc]{{\small$\omega$}}}
    \put(25,10.5){\makebox(0,0)[ct]{{\small$y$}}}
\end{picture}\vskip-5mm
\caption{The interconnected quantum plant and quantum controller form a fully quantum closed-loop system which is governed by  (\ref{x_y}), (\ref{xi_eta}) and is influenced by the environment through the quantum Wiener processes $w$, $\omega$.}
\label{fig:system}
\end{center}
\end{figure}
By using a  quadratic cost adopted in quantum control settings \cite{NJP_2009,VP_2013a} from classical LQG control \cite{KS_1972}, the performance of the coherent quantum controller will be described in Section~\ref{sec:problem} in terms of an $r$-dimensional quantum process
\begin{equation}
\label{cZ}
    \cZ
     :=
    F x   + G \zeta.
\end{equation}
Here, $F\in \mR^{r\x n}$, $G\in \mR^{r\x p_2}$ are given \emph{weighting matrices} whose entries quantify the relative importance of the state variables $x_1, \ldots, x_n$ of the plant and the ``actuator output'' variables $\zeta_1, \ldots, \zeta_{p_2}$ of the controller. The choice of $F$, $G$ is specified only by control design preferences and is not subjected to physical constraints. The process $\cZ$ in (\ref{cZ}) is linearly related to the $2n$-dimensional state vector
\begin{equation}
\label{cX}
    \cX :=
    {\begin{bmatrix}
        x\\
        \xi
    \end{bmatrix}}
\end{equation}
of the
closed-loop system whose dynamics are governed by the QSDE
\begin{equation}
\label{closed}
    \rd \cX
      =
      \cA\cX \rd t +   \cB\rd \cW ,
    \qquad
    \cZ
     =
      \cC\cX
\end{equation}
which is driven by the combined quantum Wiener process $\cW$ from (\ref{cW}).
The state-space matrices  $\cA \in \mR^{2n\x 2n}$, $\cB\in \mR^{2n\x m}$, $\cC\in \mR^{r\x 2n}$ of the closed-loop system in (\ref{closed}) are obtained by combining (\ref{x_y}), (\ref{xi_eta}) with (\ref{cW}), (\ref{cZ}), (\ref{cX}) and depend on the controller matrices $a$, $b$, $c$, $e$ in an affine fashion:
\begin{equation}
\label{cABC}
    \cA:=
    {\begin{bmatrix}
                  A       & E c\\
                  e C     & a
        \end{bmatrix}},
        \qquad
    \cB
    :=
    {\begin{bmatrix}
        B       & Ed\\
        e D     & b
    \end{bmatrix}},
    \qquad
    \cC
    :=
    {\begin{bmatrix}
                  F & G c
        \end{bmatrix}}.
\end{equation}

\section{CONDITIONS FOR PHYSICAL REALIZABILITY}\label{sec:PR}

Both the quantum plant (\ref{x_y}) and the coherent quantum controller (\ref{xi_eta}) are assumed to be physically realizable as open quantum harmonic oscillators, with initial complex separable Hilbert spaces $\cH_1$, $\cH_2$. In particular, their dynamic variables (which are self-adjoint operators on the product space  $\cH_1\ox \cH_2\ox \cF_1\ox \cF_2$ at any subsequent moment of time $t>0$) satisfy CCRs
\begin{equation}
\label{xxx}
    [x,x^{\rT}] = 2i\Theta_1,
    \qquad
    [\xi,\xi^{\rT}] = 2i\Theta_2,
    \qquad
    [x,\xi^{\rT}] = 0,
\end{equation}
where $\Theta_1, \Theta_2 \in \mA_n$ are constant nonsingular matrices. An equivalent form of  the CCRs for the combined vector $\cX$ from (\ref{cX}) is
\begin{equation}
\label{Theta}
    [\cX,\cX^{\rT}]
    =
    2i\Theta,
    \qquad
    \Theta
    :=
    {\begin{bmatrix}
        \Theta_1 & 0 \\
        0 & \Theta_2
    \end{bmatrix}}.
\end{equation}
The preservation of the CCRs (\ref{xxx}) (including the commutativity between $x$ and $\xi$) is a consequence of the unitary evolution of the isolated system formed from the plant, controller and their environment. The QSDE in (\ref{closed}) preserves the  CCR matrix $\Theta$ in (\ref{Theta}) in time if and only if the matrices $\cA$, $\cB$ in (\ref{cABC}) satisfy
\begin{equation}
\label{ThetaLyap}
    \cA \Theta + \Theta \cA^{\rT}
    +
    \cB J \cB^{\rT} = 0,
\end{equation}
where $J$ is the CCR matrix of the combined quantum Wiener process $\cW$ in (\ref{cW}) given by (\ref{WCCR}).  The relation (\ref{ThetaLyap}) is obtained by taking the imaginary part of the algebraic Lyapunov equation (ALE)
\begin{equation}
\label{SLyap}
    \cA S + S \cA^{\rT}
    +
    \cB \Omega \cB^{\rT} = 0
\end{equation}
(provided $\cA$ is Hurwitz)
for the steady-state quantum covariance matrix
\begin{equation}
\label{S}
    S
    :=
    \lim_{t\to +\infty }
    \bE (\cX(t)\cX(t)^{\rT})
    =
    P+i\Theta
    =
    S^*\succcurlyeq 0,
\end{equation}
with $\Omega$ the quantum Ito matrix from (\ref{WW}).
Here, the quantum expectation $\bE(\cdot)$ is taken over the product state $\varpi\ox \ups$, where $\varpi$ is the initial quantum state of the plant and controller on $\cH_1\ox \cH_2$, and $\ups$ is the vacuum state of the external fields on $\cF_1\ox \cF_2$. We have also used the convergence $\lim_{t\to +\infty}\bE \cX(t) = 0$ which is ensured by $\cA$ being Hurwitz. The real part
\begin{equation}
\label{P}
    P:= \Re S
\end{equation}
of the quantum covariance matrix $S$ from (\ref{S}) is the unique solution to the ALE
\begin{equation}
\label{PLyap}
    \cA P + P \cA^{\rT} + \cB \cB^{\rT} = 0,
\end{equation}
obtained by taking the real part of (\ref{SLyap}),
and coincides with the controllability Gramian \cite{KS_1972} of the pair $(\cA, \cB)$. Since the left-hand side of (\ref{ThetaLyap}) is an antisymmetric matrix of order $2n$, then,
by computing the diagonal $(n\x n)$-blocks  and the upper off-diagonal block of this matrix with the aid of (\ref{cABC}), it follows that the preservation of the CCR matrix $\Theta$ in (\ref{Theta}) is equivalent to
\begin{align}
\label{CCR11}
    A \Theta_1 + \Theta_1 A^{\rT} + B J_1 B^{\rT} + E d J_2 d^{\rT}E ^{\rT} &= 0,\\
\label{CCR22}
    a  \Theta_2 + \Theta_2 a ^{\rT} + e  D  J_1 D ^{\rT}e ^{\rT} + b  J_2 b ^{\rT}  &= 0,\\
\label{CCR12}
    (\Theta_1 C ^{\rT} + B  J_1 D ^{\rT})e ^{\rT}
    +
    E (c  \Theta_2 + d  J_2 b ^{\rT})
    & =  0;
\end{align}
cf. \cite[Eqs. (18)--(20)]{VP_2011b}.
Therefore, the fulfillment  of the equalities
\begin{align}
\label{CCR12_plant}
    \Theta_1 C^{\rT}      + BJ_1 D^{\rT}
    & =  0,\\
\label{CCR12_cont}
    c  \Theta_2 + d  J_2 b ^{\rT}& =  0
\end{align}
is sufficient for (\ref{CCR12}). Note that (\ref{CCR11}), (\ref{CCR12_plant}) are the conditions for physical realizability (PR) \cite{JNP_2008,NJP_2009,SP_2009} of the quantum plant which describe the preservation of the CCR matrix $\Theta_1$ in (\ref{xxx}) and $[x,y^{\rT}] = 0$. Similarly, the relations (\ref{CCR22}), (\ref{CCR12_cont}), which describe  the preservation of the CCR matrix $\Theta_2$ in (\ref{xxx}) and $[\xi, \eta^{\rT}] = 0$, are the PR conditions for the coherent quantum controller.
%
The PR condition
(\ref{CCR22}) can be regarded as a linear equation with respect to the matrix $a$, and its general solution is representable as
\begin{equation}
\label{a}
    a
    =
    2\Theta_2 R
    -\frac{1}{2}(e  D  J_1 D ^{\rT}e ^{\rT} + b  J_2 b ^{\rT})\Theta_2^{-1}.
\end{equation}
Here, the matrix $R \in \mS_n$ specifies the free Hamiltonian $\frac{1}{2}\xi ^{\rT} R  \xi$ which the PR controller would have in the absence of interaction with its surroundings; cf. \cite[Eqs. (20)--(22) on pp. 8--9]{EB_2005}.
The other PR condition  (\ref{CCR12_cont}) allows the matrix $c $ to be expressed in terms of $b $ as
\begin{equation}
\label{c}
    c
    =
    -d
    J_2
    b ^{\rT}
    \Theta_2^{-1}.
\end{equation}
The coupling between the output matrix $c$  and the noise gain matrix $b$ makes the design of a coherent quantum controller (\ref{xi_eta}) substantially different from that of the classical controllers even at the level of achieving internal stability for the closed-loop system.  Indeed, if the additional quantum noise $\omega$  is effectively eliminated from the state dynamics of the quantum controller by letting $b=0$, then (\ref{c}) implies that $c=0$, and hence, the matrix $\cA$ in (\ref{cABC}) becomes block lower triangular. In this case, the closed-loop system in (\ref{closed}) cannot  be internally stable if $A$ is not Hurwitz.
Also note that, in the formulations of the PR conditions \cite{JNP_2008,NJP_2009,SP_2012} for the plant and controller QSDEs (\ref{x_y}), (\ref{xi_eta}), the noise feedthrough matrices are usually given by $D = \begin{bmatrix}I_{p_1} & 0\end{bmatrix}$ and $d = \begin{bmatrix}I_{p_2} & 0\end{bmatrix}$, with $
    p_1\< m_1$ and
$
    p_2\< m_2
$. Such matrices $D$ and $d$ have full row rank and satisfy
\begin{equation}
\label{DDdd}
    DD^{\rT} = I_{p_1},
    \qquad
    dd^{\rT} = I_{p_2}.
\end{equation}
The full row rank property of $D$ corresponds to nondegeneracy of measurements in the classical setting, where $y$ in (\ref{x_y}) is an observation process. Furthermore, since $\det J_2\ne 0$ and $\det\Theta_2 \ne 0$, the full row rank property of $d$ implies that the map $\mR^{n\x m_2}\ni b\mapsto c\in \mR^{p_2\x n}$, given by (\ref{c}), is onto. This allows the matrix $c$ to be assigned any value by an appropriate choice of $b$, which plays a part in the stabilization issue mentioned above. Although (\ref{DDdd}) simplifies the algebraic manipulations, it is the rank properties of the matrices $D$, $d$  that are most important.

\section{COHERENT QUANTUM LQG CONTROL PROBLEM}\label{sec:problem}

    Following \cite{NJP_2009,VP_2013a}, we formulate the CQLQG control problem as that of minimizing the steady-state mean square value
    \begin{equation}
    \label{cE}
        \cE
        :=
        \frac{1}{2}
        \lim_{t\to+\infty}
            \bE
            (
                \cZ(t) ^{\rT} \cZ(t)
            )
            =
            \frac{1}{2}
        \Bra
            \cC ^{\rT}\cC ,
            P
        \Ket
        \longrightarrow
        \min
    \end{equation}
    for the criterion process $\cZ$ of the closed-loop system  (\ref{closed}) over internally stabilizing (that is, making the matrix $\cA$ Hurwitz) PR quantum controllers (\ref{xi_eta}) of fixed dimensions described in Sections \ref{sec:controller}, \ref{sec:PR}.
    Here,
    $
        \cZ^{\rT}\cZ
        =
        \sum_{k=1}^r
        \cZ_k^2
    $
    is the sum of squared entries of $\cZ$ (and hence, $\cZ^{\rT}\cZ$ is a positive semi-definite self-adjoint operator) and $P$ is the controllability Gramian of the closed-loop system given by (\ref{P}), (\ref{PLyap}).
    The LQG cost $\cE$ in (\ref{cE}) is a function of the triple
    \begin{equation}
    \label{u}
        u:= (R,b,e)
        \in
        \mS_n
        \x
        \mR^{n\x m_2}
        \x
        \mR^{n\x p_1}
        =: \mU
    \end{equation}
    which parameterizes PR quantum controllers  (\ref{xi_eta}) through (\ref{a}), (\ref{c}), with the controller noise feedthrough matrix
    $
        d\in \mR^{p_2\x m_2}
    $
    being fixed and satisfying (\ref{DDdd}).  Accordingly, the minimization in (\ref{cE}) is carried out over the set
    \begin{equation}
    \label{U0}
        \mU_0:= \{u \in \mU:\ \cA\ {\rm in}\ (\ref{cABC})\ {\rm is\ Hurwitz}\}
    \end{equation}
    of those $u$ which specify internally stabilizing PR quantum controllers   for the quantum plant (\ref{x_y}).
    For what follows, the set $\mU$ on the right-hand side of (\ref{u})
    is endowed with the structure of a Hilbert space with the direct sum inner product
    $    \bra (R,b,e),(R',b',e') \ket := \bra R,R'\ket + \bra b,b'\ket + \bra e,e'\ket$. Note that $\mU_0$ in (\ref{U0}) is an open subset of $\mU$.
\section{GRADIENT FLOW FOR THE LQG COST}\label{sec:Grad_Dec}

     The gradient descent approach to the solution of the CQLQG control problem is to move with the negative gradient flow for the LQG cost function $\cE$ in (\ref{cE}) towards a local minimum. The gradient descent can be regarded as a dynamical system governed by the ODE
     \begin{equation}
        \label{gradient_sys}
        \dot{u}(\tau) = - g(u(\tau)), \qquad u(0) = u_0.
     \end{equation}
     Here, $\dot{(\, )}:= \d_{\tau}(\cdot)$ is the derivative with respect to fictitious time $\tau\> 0$, and the gradient
     \begin{equation}
        \label{grad}
        g(u): = \d_u \cE(u) = (\d_R \cE, \d_b \cE, \d_e \cE)
     \end{equation}
     is the Fr\'{e}chet derivative of the LQG cost with respect to $u$ in the Hilbert space $\mU$ associated with the Hamiltonian parameterization of PR quantum controllers in (\ref{u}). More precisely, the map $g: \mU_0 \to \mU$ is well-defined on the open set $\mU_0$ in (\ref{U0}). The starting point in (\ref{gradient_sys}) is assumed to satisfy
     \begin{equation}
     \label{u0}
       u_0 := (R_0, b_0, e_0) \in \mU_0,
     \end{equation}
     so that the corresponding PR controller is internally stabilizing.
         Unless $u_0$ is a stationary point of $\cE$,       the LQG cost is strictly decreasing along the trajectory of the ODE (\ref{gradient_sys}) in view of
     $
        \cE(u(\tau))^{^\bullet} = -\|g(u(\tau))\|^2
     $.
     The first-order Fr\'{e}chet derivative in (\ref{grad}) is computed in the following lemma. For its formulation, we denote by $Q$ the observability Gramian of the pair $(\cA,\cC)$ which is a unique solution of the ALE
          \begin{equation}
            \label{QLyap}
                \cA^{\rT} Q + Q \cA + \cC^{\rT} \cC =0,
            \end{equation}
            provided the matrix $\cA$ in (\ref{cABC}) is Hurwitz. Furthermore, we will use the Hankelian of the closed-loop system defined by
            \begin{equation}
            \label{H}
                H:= QP.
            \end{equation}
            Also, we partition $(2n\x 2n)$-matrices $X$ (such as $P$, $Q$, $H$) into $(n\x n)$-blocks  $X_{jk}$ as
$
    X
                :=
                {\small\begin{array}{cc}
                {}_{\leftarrow n \rightarrow}  {}_{\leftarrow n\rightarrow} &\\
                        {\small\begin{bmatrix}
                        X_{11} & X_{12}\\
                        X_{21} & X_{22}
           \end{bmatrix}}
            &\!\!\!\!\!
                {\small\begin{matrix}
                    \updownarrow\!{}^n\\
                    \updownarrow\!{}_n
            \end{matrix}}\\
                {}
            \end{array}}
$.

     \begin{lem}
          \label{lem:1st_Frechet}
        For any $u \in \mU_0$ from (\ref{U0}), the Fr\'{e}chet derivative (\ref{grad}) of the LQG cost $\cE$ in (\ref{cE}) can be computed as
          \begin{align}
                \label{eq:frech_der_R}
                \d_R \cE &= -2\sym(\Theta_2 H_{22}),\\
                \label{eq:frech_der_b}
                \d_b \cE &= Q_{21}Ed+Q_{22}b-\psi b J_2 - \chi d J_2, \\
                \label{eq:frech_der_e}
                \d_e \cE &= H_{21}C^{\rT} + Q_{21} BD^{\rT} + Q_{22} e - \psi e D J_1 D^{\rT}.
          \end{align}
          Here, $\psi$ and $\chi$ are auxiliary $(n\x n)$-matrices defined by
          \begin{equation}
          \label{psi_chi}
            \psi := \asym (H_{22} \Theta_2^{-1}),\qquad
            \chi := \Theta_2^{-1}(H_{12}^{\rT} E + P_{21}F^{\rT} G  + P_{22} c^{\rT} G^{\rT}G),
          \end{equation}
          with $P$, $Q$, $H$ the Gramians and Hankelian from (\ref{PLyap}), (\ref{QLyap}), (\ref{H}). \hfill $\square$
     \end{lem}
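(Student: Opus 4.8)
The plan is to compute the Fr\'{e}chet derivative of $\cE$ in (\ref{cE}) by the standard ``differentiate the cost, then differentiate the Lyapunov equation'' technique, and then to re-express the result in terms of the controller parameters $(R,b,e)$ by differentiating the parameterization formulas (\ref{a}), (\ref{c}). Writing $\cE(u) = \frac12 \bra \cC^\rT\cC, P\ket = \frac12\bra Q, \cB\cB^\rT\ket$ (using the adjointness of the Lyapunov operator, which swaps the controllability Gramian $P$ against the observability Gramian $Q$), a perturbation $u \mapsto u + \delta u$ induces variations $\delta\cA$, $\delta\cB$ that are \emph{affine} in the controller matrices $a,b,c,e$ via (\ref{cABC}). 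The first variation of the cost then takes the familiar trace form
\begin{equation}
\label{eq:plan_delta_E}
    \delta\cE = \bra Q, \delta\cB\,\cB^\rT\ket + \bra Q\cA P + P\cA^\rT Q,\ \text{(terms in }\delta\cA)\ket
    = \bra Q, \cB\,\delta\cB^\rT\ket + \Bra H^\rT + H,\ \Theta\text{-weighted }\delta\cA\ \text{pieces}\KET,
\end{equation}
where the Hankelian $H = QP$ from (\ref{H}) appears precisely because the $\delta\cA$ contribution, after one use of (\ref{PLyap}) and (\ref{QLyap}), contracts $Q$ with $P$. Reading off the blocks of $\delta\cA$ and $\delta\cB$ from (\ref{cABC}) gives the derivatives with respect to the \emph{raw} matrices $a$, $b$, $c$, $e$ directly: $\d_a\cE$ pairs with the $(2,2)$ block, producing $H_{22}$; $\d_c\cE$ and $\d_b\cE$ involve $Q_{21}E$, $Q_{22}$, $H_{12}$, $H_{21}$, $P_{21}$, $P_{22}$; and $\d_e\cE$ involves $H_{21}C^\rT$, $Q_{21}BD^\rT$, $Q_{22}$.

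The second stage is the chain rule through the PR parameterization. Here $a$ is a function of $(R,b,e)$ by (\ref{a}) and $c$ is a function of $b$ by (\ref{c}), while $e$ enters both (\ref{cABC}) directly and through $a$. Differentiating (\ref{a}) gives $\d R\mapsto \delta a = 2\Theta_2\,\delta R$, so that $\d_R\cE$ is $2\Theta_2$ (adjointly, $2\Theta_2^\rT = -2\Theta_2$ up to the symmetrizer onto $\mS_n$) applied to the $a$-derivative, yielding $\d_R\cE = -2\sym(\Theta_2 H_{22})$ after symmetrizing onto $\mS_n$ — the domain is $\mS_n$, so only the symmetric part survives, which is why the formula carries $\sym$. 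Differentiating (\ref{a}) and (\ref{c}) with respect to $b$ and $e$ produces the quadratic terms $b J_2$, $d J_2$, $e D J_1 D^\rT$ contracted against $\Theta_2^{-1}$-weighted combinations of $H$ and $Q$ blocks; collecting these is exactly what defines the auxiliary matrices $\psi$ and $\chi$ in (\ref{psi_chi}). The matrix $\psi = \asym(H_{22}\Theta_2^{-1})$ arises from the $b$- and $e$-dependence of the noise/observation-gain quadratic form in (\ref{a}) (the antisymmetrizer because it multiplies the antisymmetric $J_2$ or the $J_1$-sandwich), and $\chi$ collects the contribution routed through $c$ via (\ref{c}) together with the direct $\cC$-dependence $\cC = [F\ \ Gc]$, which is why $\chi$ contains $H_{12}^\rT E$, $P_{21}F^\rT G$ and $P_{22}c^\rT G^\rT G$.

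The main obstacle will be the careful bookkeeping of adjoints and transposes in the second stage: each parameterization formula is linear but not self-adjoint, and the weightings by $\Theta_2^{-1}$, $J_1$, $J_2$, $d$ must be transposed correctly when moving them across the Frobenius inner product, with the additional subtlety that $\d_R\cE$ lives in $\mS_n$ (forcing a symmetrization) while $\d_b\cE$ and $\d_e\cE$ live in full matrix spaces (no symmetrization). A secondary technical point is justifying differentiability itself: since $\mU_0$ is open and $\cA$ is Hurwitz there, the solution maps $u\mapsto P$ and $u\mapsto Q$ of the ALEs (\ref{PLyap}), (\ref{QLyap}) are smooth (the Lyapunov operator $X\mapsto \cA X + X\cA^\rT$ is boundedly invertible on the space of symmetric matrices when $\cA$ is Hurwitz), so $\cE$ is smooth and the Fr\'{e}chet derivative exists; this is where I would appeal to the algebraic differentiation machinery of the cited references and to Appendix~\ref{app:FreDeriv}. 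Once the raw derivatives $\d_a\cE$, $\d_b\cE$, $\d_c\cE$, $\d_e\cE$ are in hand, the rest is assembling (\ref{eq:frech_der_R})--(\ref{eq:frech_der_e}) by the chain rule and recognizing the groupings (\ref{psi_chi}).
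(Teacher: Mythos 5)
Your plan is essentially the paper's own proof: rewrite $\cE$ via the adjoint Lyapunov operator as $\tfrac12\bra Q,\cB\cB^{\rT}\ket$, obtain $\delta\cE=\bra H,\delta\cA\ket+\bra Q\cB,\delta\cB\ket+\bra \cC P,\delta\cC\ket$ with $H=QP$, read off the blocks, and push through the chain rule for (\ref{a}), (\ref{c}) with the symmetrization onto $\mS_n$ for the $R$-derivative — exactly as in Appendix~\ref{app:FreDeriv}. The only blemish is the loosely written intermediate display (the correct contraction is $\bra H,\delta\cA\ket$, not an ``$H^{\rT}+H$, $\Theta$-weighted'' pairing), but the block formulas and the identification of $\psi$ and $\chi$ that you actually use are the right ones.
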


     The proof of Lemma~\ref{lem:1st_Frechet} is similar to that of \cite[Theorem~1]{VP_2013a} and is given in Appendix~\ref{app:FreDeriv} for completeness.
     That the trajectories of the gradient descent system in (\ref{gradient_sys}) will not ``miss'' local minima of the LQG cost is justified by the following lemma.
     \begin{lem}
        \label{lem:stability}
        A point $u_*  \in \mU_0$ in (\ref{U0}) is a stable equilibrium of the ODE (\ref{gradient_sys}) if and only if it is a local minimum of the LQG cost $\cE$ in (\ref{cE}).\hfill$\square$
     \end{lem}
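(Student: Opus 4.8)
The plan is to recognize Lemma~\ref{lem:stability} as an instance of the standard correspondence between Lyapunov stability of a gradient flow and local minimality of its potential, applied to the function $\cE$ on the open manifold $\mU_0$. The central fact I would use is that, along any trajectory of \eqref{gradient_sys}, the cost satisfies $\cE(u(\tau))^{\bullet} = -\|g(u(\tau))\|^2 \< 0$, so $\cE$ is a (weak) Lyapunov function for the flow; this is already noted in the text. Since $\mU_0$ is open and $\cE$ is real-analytic in $u$ on $\mU_0$ (it is built from solutions of ALEs with Hurwitz $\cA$, composed with the rational Hamiltonian parameterization \eqref{a}--\eqref{c}), the map $g = \d_u\cE$ is smooth on $\mU_0$, and equilibria of \eqref{gradient_sys} are exactly the critical points $g(u_*)=0$.

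For the implication ``local minimum $\Rightarrow$ stable equilibrium,'' I would argue as follows. If $u_*$ is a local minimum, then $g(u_*) = \d_u\cE(u_*) = 0$, so $u_*$ is an equilibrium. Fix a neighbourhood on which $\cE(u) \> \cE(u_*)$, and consider the shifted nonnegative function $V(u) := \cE(u) - \cE(u_*)$. Then $V(u_*) = 0$, $V \> 0$ locally, and $\dot V = -\|g\|^2 \< 0$ along trajectories, so $V$ is a Lyapunov function in the sense of Lyapunov's stability theorem: for any $\veps > 0$ one picks a sublevel set $\{V \< \delta\}$ contained in the $\veps$-ball around $u_*$ (using that $\cE$ is continuous and, near $u_*$, that $V$ has a positive infimum on the boundary sphere of radius $\veps$), and trajectories starting in that sublevel set cannot leave it because $V$ is nonincreasing. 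This gives Lyapunov stability of $u_*$. (If one wants asymptotic stability rather than mere stability, one invokes LaSalle's invariance principle together with the fact that $\dot V = 0$ forces $g(u)=0$; but the statement only asserts ``stable,'' so Lyapunov's theorem suffices.)

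For the converse ``stable equilibrium $\Rightarrow$ local minimum,'' I would proceed by contraposition: suppose $u_*$ is an equilibrium that is not a local minimum of $\cE$. First, $u_*$ being an equilibrium gives $g(u_*)=0$, i.e. $\d_u\cE(u_*)=0$. Since $u_*$ is not a local minimum, every neighbourhood of $u_*$ contains points $u$ with $\cE(u) < \cE(u_*)$; pick one, say $u_1$, arbitrarily close to $u_*$. The trajectory of \eqref{gradient_sys} started at $u_1$ has $\cE$ strictly decreasing (strictly, because $u_1$ is not itself a critical point — it cannot be, as any critical value reachable arbitrarily close to $u_*$ with lower cost would have to coincide with a limit; more carefully, if $u_1$ were critical I would perturb it slightly along a descent direction, which exists precisely because $u_*$ is not a local min), so $\cE(u(\tau)) \< \cE(u_1) < \cE(u_*)$ for all $\tau \> 0$ in the interval of existence. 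Hence the trajectory stays in the region $\{\cE < \cE(u_*)\}$ and can never return to any neighbourhood of $u_*$ on which $\cE \> \cE(u_*)$; in particular it does not converge to $u_*$ and, by choosing $u_1$ inside an arbitrarily small ball, we see the equilibrium fails the Lyapunov stability definition. This contradicts stability, completing the contrapositive.

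The main obstacle — really the only subtlety — is handling the degenerate directions: the gradient flow has only a weak Lyapunov function ($\dot\cE$ vanishes on the whole critical set, not just at $u_*$), so one must be careful that ``stable'' is interpreted in the correct (Lyapunov, not asymptotic) sense and that the descent argument in the converse genuinely produces points with strictly smaller cost arbitrarily near $u_*$. A secondary technical point is that $\mU_0$ is only an open subset of the Hilbert space $\mU$, not all of it, so trajectories might in principle approach its boundary; but this does not affect the \emph{local} analysis near an interior equilibrium $u_*$, since one simply works inside a ball contained in $\mU_0$. No completeness or global-existence issues arise because the statement is purely local. I would also remark that real-analyticity of $\cE$ (hence applicability of a Łojasiewicz-type argument) is available if one wanted the stronger conclusion that stability forces convergence, but it is not needed for the stated equivalence.
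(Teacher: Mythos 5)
Your proof is not correct: both directions contain genuine gaps, and they are really the same gap --- you treat the equivalence as the elementary Lyapunov-function fact for gradient flows, whereas that equivalence is actually \emph{false} for general $C^\infty$ functions and holds here only because $\cE$ is real-analytic on $\mU_0$. The paper's proof consists precisely of (i) citing the Absil--Kurdyka theorem on stable equilibria of gradient systems of \emph{analytic} functions and (ii) verifying analyticity via the rational representation $\cE = -\frac{1}{2}\col(\cC^{\rT}\cC)^{\rT}(\cA\op\cA)^{-1}\col(\cB\cB^{\rT})$, which exhibits $\cE$ as a rational function of $u$ with nonvanishing denominator on $\mU_0$. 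You mention analyticity and the \L{}ojasiewicz inequality at the end only to dismiss them as ``not needed for the stated equivalence''; in fact they are exactly what is needed.

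Concretely: in the direction ``local minimum $\Rightarrow$ stable'', your trapping argument requires $\inf_{\|u-u_*\|=\veps}\bigl(\cE(u)-\cE(u_*)\bigr)>0$, i.e.\ the minimum to be strict. Here it never is: as the paper itself notes, the symplectic symmetry $\xi\mapsto\Sigma\xi$ produces a continuum of minimizers with the same cost passing through $u_*$, so this infimum is zero and the sublevel set $\{\cE<\cE(u_*)+\delta\}$ is not contained in any ball around $u_*$ for any $\delta>0$. For non-strict minima of merely smooth functions the implication genuinely fails (Absil and Kurdyka exhibit a $C^\infty$ local minimum that is an unstable equilibrium); one needs the \L{}ojasiewicz gradient inequality, valid for analytic $\cE$, to bound the arclength of descent trajectories and thereby keep them near $u_*$. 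In the converse direction, your contrapositive only shows that the trajectory from $u_1$ remains in $\{\cE<\cE(u_*)\}$ and does not converge to $u_*$ --- but Lyapunov stability requires neither convergence nor remaining in the superlevel set, and since $u_*$ is not a local minimum the set $\{\cE<\cE(u_*)\}$ meets every ball around $u_*$, so no escape from a fixed neighbourhood is forced. (A radial $C^\infty$ example such as $f(u)=-\re^{-1/\|u\|}\sin^2(\pi/\|u\|)$, extended by $f(0)=0$, has the origin as a Lyapunov-stable equilibrium of the gradient flow that is not a local minimum, so this direction also cannot be obtained from smoothness alone.) To repair the proof you should establish the analyticity of $\cE$ on $\mU_0$ --- it is rational in $u$ because $\cA$, $\cB$, $\cC$ are polynomial in $(R,b,e)$ and $\cA\op\cA$ is nonsingular whenever $\cA$ is Hurwitz --- and then invoke the analytic-gradient-system result rather than classical Lyapunov theory.
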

\begin{proof}
The assertion of the lemma can be established by using \cite[Theorem~3]{absil2006stable} and the \emph{analyticity} \cite{H_1990} (rather than infinite differentiability) of the LQG cost $\cE$ in a neighbourhood of any point $u\in \mU_0$. The analyticity follows from the representation
     \begin{equation}
    \label{rat}
        \cE 
         = -\frac{1}{2} \col(\cC^{\rT} \cC)^{\rT}(\cA \op \cA)^{-1}\col(\cB\cB^{\rT})
     \end{equation}
which is obtained from (\ref{PLyap}), (\ref{cE}) by using the column-wise vectorization $\col(\cdot)$   of matrices \cite{M_1988,SIG_1998}  and the Kronecker sum $\cA \op \cA:= I_{2n}\ox \cA + \cA \ox I_{2n}$ of the matrix $\cA$ with itself. Indeed, the representation (\ref{rat}) implies that $\cE$ is a rational function of the entries of $\cA$, $\cB$, $\cC$ in (\ref{cABC}) which, in turn, are polynomial functions of the entries of $R$, $b$, $e$ in view of (\ref{a}), (\ref{c}), and hence, $\cE(u)$ is a rational function of $u$. Therefore, the function $\cE(u)$ is analytic on the open set $\mU_0$  since the matrix $\cA\op \cA$ is also Hurwitz (and hence, nonsingular) for any $u \in \mU_0$.
\end{proof}

     In practice, the gradient descent ODE (\ref{gradient_sys}) is solved by using a numerical algorithm, which is the subject of the next section.

\section{GRADIENT DESCENT ALGORITHM}\label{sec:Grad_Algo}

We will now consider a numerical algorithm which implements the gradient descent method (\ref{gradient_sys}) for the CQLQG control problem in the form \begin{equation}
    \label{unext}
            u_{k+1}:=u_k - s_k g(u_k),
            \qquad
            k = 0,1,2,\ldots.
\end{equation}
This recurrence equation is initialized with matrices $R_0$, $b_0$, $e_0$ of an internally stabilizing PR controller in (\ref{u0}) (see Section~\ref{sec:Initialization}). The gradient $g(u_k)$ is computed by using Lemma~\ref{lem:1st_Frechet}, and   the stepsize $s_k>0$ is chosen as described in Section~\ref{sec:Stepsize}. The iterations in (\ref{unext}) are stopped when a termination condition is satisfied (see Section~\ref{subsec:Term_Cond}). The ingredients of the algorithm are discussed in the subsequent sections.
%
%
%

\subsection{Initialization}\label{sec:Initialization}

    The gradient descent algorithm (\ref{unext}) relies on existence of an internally stabilizing PR quantum controller which can be used as an initial point. As mentioned in Introduction, the existence of such controllers (that is, nonemptiness of the set $\mU_0$ in (\ref{U0})) for a given quantum plant  (and a systematic method of finding them) remains an open problem. In the present version of the algorithm, this quantum stabilization problem is solved by using a random search in the finite-dimensional Hilbert space $\mU$ in (\ref{u}).

\subsection{Stepsize selection}\label{sec:Stepsize}

According to the conventional limited minimization rule  (see, for example,  \cite{bertsekas99nonlinear}), the stepsize $s_k$ is chosen for each iteration of the gradient descent by solving the minimization problem
    \begin{equation}
        \label{equ:step_min}
        s_k\in \Argmin_{0\< s\< h_k} \cE(u_k-s g(u_k))
    \end{equation}
with a \emph{constant} search horizon $h_k:= h>0$. Here, we use the convention that $\cE(u):=+\infty$ if $u\ne \mU_0$ (thus discarding those controllers which are not internally stabilizing).
A restricted version of the line search with a constant horizon $h$ may suffer from the inability to adapt properly to the behaviour of the function $\cE$ in its minimization over the ray $\{u_k-sg(u_k):\ s \>0\} \subset \mU$. In order to overcome this issue, for the stepsize selection in the gradient descent algorithm (\ref{unext}), we will use  a modified version of the limited minimization rule  with an adaptive choice of the search horizon $h_k$ in each iteration.
More precisely, $h_k$ can be chosen so as to enable (\ref{equ:step_min}) to ``capture'' the minimum of $\cE$ over the whole ray if $\cE$ is a strictly convex quadratic function. To this end, consider the quadratically truncated Taylor series
    \begin{equation}
    \label{quad}
        \cE(u-sg) = \cE(u) - s\cD_g \cE + \frac{s^2}{2} \cD_g^2 \cE + o(s^2),
    \end{equation}
where
        \begin{align}
        \label{Gat1}
            \cD_v \cE(u) & := \d_s\cE(u+s v)\big|_{s=0} = \bra g(u), v\ket,\\
        \label{Gat2}
            \cD_v^2 \cE(u) & :=\d_s^2\cE(u+s v)\big|_{s=0} = \bra \d_u^2 \cE(u)(v), v\ket
        \end{align}
are the first and second-order G\^{a}teaux (or directional) derivatives \cite{LS_1961} of the LQG cost at a point $u \in \mU_0$ (specifying an internally  stabilizing controller) along $v \in \mU$. Here, in view of (\ref{Gat1}),
        \begin{equation}
        \label{gg}
            \cD_g\cE = \|g\|^2 = \|\d_R\cE\|^2 + \|\d_b\cE\|^2 + \|\d_e\cE\|^2\>0.
        \end{equation}
Also, $\d_u^2 \cE(u):=\d_u g(u)$ in (\ref{Gat2}) is the second-order Fr\'{e}chet derivative of $\cE$ which is a self-adjoint operator on the Hilbert space $\mU$ in (\ref{u}) whose computation is outlined in Appendix~\ref{app:sec_Gateaux}.
Now, if $\cD_g^2 \cE(u)>0$, then the quadratic polynomial of $s$ on the right-hand side of (\ref{quad}) (with the higher-order terms being neglected) achieves its unique minimum at a nonnegative value of $s$:
\begin{equation}
\label{smin}
  \argmin_{s \> 0} \left(\frac{s^2}{2} \cD_g^2\cE- s\cD_g \cE \right) = \frac{\cD_g\cE}{\cD_g^2\cE} = \frac{\|g\|^2}{\cD_g^2\cE}.
\end{equation}
This suggests using the right-hand side of (\ref{smin}) as a search horizon $h_k$ in (\ref{equ:step_min}), provided $\cD_g^2 \cE(u)>0$. However, if the latter inequality does not hold, the argument, based on a quadratic approximation of the minimization problem (\ref{equ:step_min}), is no longer valid and needs to be amended. In this case (when  $\cD_g^2 \cE(u)\< 0$), the search horizon can be chosen so as to avoid the domination of nonlinear terms over the linear term in the quadratically truncated Taylor series for the LQG cost along the ideal gradient descent trajectory in (\ref{gradient_sys}):
\begin{align}
\nonumber
    \cE(u(\tau+s)) &= \cE(u(\tau)) +(\cE(u))^{^\bullet}s + (\cE(u))^{^{\bullet\bullet}}\frac{s^2}{2}  + o(s^2)\\
\label{quad2}
    & =
    \cE(u(\tau)) - \|g\|^2s +  \cD_g^2\cE s^2 + o(s^2),
\end{align}
where (\ref{Gat2}) is used.
For $s\> 0$, the comparison of the absolute values $\|g\|^2s$ and $|\cD_g^2\cE| s^2$ of the linear and quadratic terms in (\ref{quad2}) shows that the latter does not dominate the former if
\begin{equation}
\label{abs}
    s\< \frac{\|g\|^2}{|\cD_g^2\cE|}.
\end{equation}
This inequality is closely related to the accuracy of (\ref{unext}) as Euler scheme for numerical integration of the ODE (\ref{gradient_sys}). More precisely, if the stepsizes $s_k>0$ in (\ref{unext}) are significantly smaller than the respective values of the right-hand side of (\ref{abs}), then $u_k$ becomes an accurate approximation of the ideal gradient descent trajectory $u(\tau)$ at fictitious  time $\tau:= s_0 + \ldots +s_{k-1}$. A combination of (\ref{smin}) and (\ref{abs}) justifies the following heuristic rule for choosing the search horizon at the current point $u_k \in \mU_0$:
\begin{equation}
\label{hor}
    h_k:= \min\left(h_{\max},\, \frac{\|g\|^2}{|\cD_{g}^2\cE|}\Big|_{u=u_k}\right).
\end{equation}
Here, $h_{\max}$ is a given positive threshold which becomes active, for example, if $\cD_{g}^2\cE$ vanishes.
The stepsize selection algorithm considered below, replaces the minimization problem in (\ref{equ:step_min}) with a different procedure which involves a finite subset of values of $s$ from a geometric progression
\begin{equation}
\label{skl}
  s_{k,\ell}:= h_k f^{\ell},
  \qquad
  \ell = 0,1,2,\ldots
\end{equation}
whose initial value $h_k$ is given by (\ref{hor}). The common ratio $f\in (0,1)$ is a parameter of the algorithm which affects how ``densely'' the progression fills the interval $[0,h_k]$. Now, the adaptive stepsize selection algorithm proceeds as follows:
\begin{equation}
\label{sk}
    s_k := s_{k,j},
\end{equation}
where the $j$th element of the geometric progression (\ref{skl}) is chosen according to the Armijo rule \cite{bertsekas99nonlinear} with a parameter $\sigma \in (0,1)$:
\begin{equation}
\label{Arm}
                j :=
                \min\big\{
                \ell\> 0:\,
                \cE(u_k)-\cE(u_k - s_{k,\ell} g(u_k)) \>  \sigma s_{k,\ell} \|g(u_k)\|^2
                \big\}.
\end{equation}
Here, the subset of indices $\ell$ is nonempty since $\sigma <1$ and $\lim_{\ell \to +\infty}s_{k,\ell} = 0$.
The inequality in (\ref{Arm}) is important in the convergence analysis of the gradient descent algorithm. In particular, the condition $\sigma >0$ ensures that $\cE(u_k)$ is strictly decreasing until $u_k$ achieves a stationary point of the LQG cost. Such a point is a stable equilibrium of the gradient descent only if it delivers a local minimum to the LQG cost.
\subsection{Termination condition}\label{subsec:Term_Cond}

    Since the gradient descent sequence $u_k$ in (\ref{unext}) can converge to a local minimum of the LQG cost only asymptotically, as $k\to +\infty$, the algorithm is equipped with a termination condition (for stopping the iterations) which reflects the proximity to the limit point.  More precisely, we use the following termination condition which employs the relative smallness of the gradient as specified by a dimensionless parameter $\eps>0$:
    \begin{equation}
    \label{stop}
        s_k\|g(u_k)\| \< \eps \|u_k\|.
    \end{equation}

\section{CONVERGENCE OF THE GRADIENT DESCENT ALGORITHM}\label{subsec:con_analys}

    As the proposed algorithm is based on the classical gradient descent approach, its convergence analysis follows a similar reasoning, which we provide below for completeness.

    \begin{thm}
        \label{thm:converg}
        Suppose $(u_k)_{k\>0}$ is the gradient descent sequence  in (\ref{unext}) with the stepsize selection described by (\ref{hor})--(\ref{Arm}).
        Then every limit point $u_* \in \mU_0$ of this sequence is a stationary point of the LQG cost $\cE$, that is, $g(u_*) =0$.
    \end{thm}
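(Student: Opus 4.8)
The plan is to follow the classical convergence argument for gradient descent with Armijo line search, adapted to the Hilbert space $\mU$ and the analytic structure of $\cE$ established in Lemma~\ref{lem:stability}. First I would fix a limit point $u_* \in \mU_0$ and a subsequence $(u_{k_i})_{i\>0}$ with $u_{k_i} \to u_*$. Since $\cE$ is analytic (hence continuous) on the open set $\mU_0$, the full sequence $(\cE(u_k))_{k\>0}$ is monotonically nonincreasing by the Armijo inequality (\ref{Arm}) with $\sigma>0$, and it is bounded below (as $\cE = \frac12 \bra \cC^{\rT}\cC, P\ket \> 0$), so $\cE(u_k) \to \cE(u_*)$ as $k\to+\infty$. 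Consequently the successive decrements satisfy $\cE(u_k) - \cE(u_{k+1}) \to 0$, and combining this with the Armijo inequality gives
\begin{equation*}
    \sigma\, s_k\, \|g(u_k)\|^2
    \<
    \cE(u_k) - \cE(u_{k+1})
    \longrightarrow 0 .
\end{equation*}
Hence $s_k \|g(u_k)\|^2 \to 0$ along the whole sequence, and in particular along the subsequence $k_i$.

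The crux is then to rule out the possibility that $g(u_*) \ne 0$. Suppose, for contradiction, that $g(u_*) \ne 0$; write $g_* := g(u_*)$ and $\gamma := \|g_*\|^2 > 0$. By continuity of $g$ on $\mU_0$ (which follows, like analyticity of $\cE$, from the rational dependence of $P$, $Q$, $H$ on $u$ and smoothness of the map (\ref{a})--(\ref{c}), valid because $\cA$ and $\cA\op\cA$ are Hurwitz at $u_*$), we have $g(u_{k_i}) \to g_*$, so $\|g(u_{k_i})\|^2 \to \gamma > 0$; therefore $s_{k_i} \to 0$. Next I would exploit the structure of the stepsize rule (\ref{hor})--(\ref{Arm}): the search horizon satisfies $h_{k} = \min(h_{\max}, \|g\|^2/|\cD_g^2\cE|)$ which, by continuity of $g$ and of the second-order Fr\'echet derivative $\d_u^2\cE$ near $u_*$ (see Appendix~\ref{app:sec_Gateaux}), stays bounded away from $0$ along the subsequence: $\liminf_i h_{k_i} =: \underline{h} > 0$. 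Since $s_{k_i} \to 0$ while $h_{k_i}$ is bounded below, for large $i$ the Armijo exponent $j$ in (\ref{Arm}) is strictly positive, meaning the previous trial $\tilde{s}_i := s_{k_i}/f$ was rejected:
\begin{equation*}
    \cE(u_{k_i}) - \cE\big(u_{k_i} - \tilde{s}_i\, g(u_{k_i})\big)
    <
    \sigma\, \tilde{s}_i\, \|g(u_{k_i})\|^2 .
\end{equation*}

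Now I would derive the contradiction by a mean-value / Taylor estimate on the rejected step. Since $\tilde{s}_i = s_{k_i}/f \to 0$ and $\mU_0$ is open, the segment $[u_{k_i}, u_{k_i} - \tilde{s}_i g(u_{k_i})]$ lies in $\mU_0$ for large $i$, so by the mean value theorem there is $\theta_i \in (0,1)$ with
\begin{equation*}
    \cE(u_{k_i}) - \cE\big(u_{k_i} - \tilde{s}_i g(u_{k_i})\big)
    =
    \tilde{s}_i \big\langle g\big(u_{k_i} - \theta_i \tilde{s}_i g(u_{k_i})\big),\, g(u_{k_i})\big\rangle .
\end{equation*}
Dividing the rejected-step inequality by $\tilde{s}_i > 0$ gives $\big\langle g(u_{k_i} - \theta_i \tilde{s}_i g(u_{k_i})),\, g(u_{k_i})\big\rangle < \sigma \|g(u_{k_i})\|^2$. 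Letting $i\to\infty$, using $\tilde{s}_i\to 0$, the boundedness of $g(u_{k_i})\to g_*$, and continuity of $g$, the left side tends to $\langle g_*, g_*\rangle = \gamma$, so we obtain $\gamma \< \sigma\gamma$, i.e. $(1-\sigma)\gamma \< 0$; since $\sigma \in (0,1)$ this forces $\gamma \< 0$, contradicting $\gamma > 0$. Therefore $g(u_*) = 0$, which is the claim. The main obstacle I anticipate is the lower bound $\liminf_i h_{k_i} > 0$ on the adaptive horizon: it requires knowing that $|\cD_g^2\cE|$ does not blow up near $u_*$, i.e. local boundedness of the second-order Fr\'echet derivative, which in turn rests on the rational (hence locally $C^\infty$) dependence of the Gramians on $u$ over $\mU_0$ — so I would state this local-boundedness fact explicitly, citing the representation (\ref{rat}) and Appendix~\ref{app:sec_Gateaux}, before invoking it.
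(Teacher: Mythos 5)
Your proposal is correct and follows essentially the same route as the paper's proof: monotone decrease of $\cE(u_k)$ forces $s_k\|g(u_k)\|^2\to 0$, continuity of $g$ and the horizon rule (\ref{hor}) force $s_{k_i}\to 0$ with $h_{k_i}$ bounded away from zero, so the previously rejected Armijo trial $s_{k_i}/f$ yields the inequality whose limit gives $\|g(u_*)\|^2\<\sigma\|g(u_*)\|^2$, a contradiction. The only difference is that you justify the limit of the difference quotient via an explicit mean-value argument (and flag the local boundedness of $\cD_g^2\cE$ needed for the horizon bound), whereas the paper passes to the limit directly using analyticity; both are sound.
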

        \begin{proof}
Since the sequence $\cE(u_k)\>0$ is nonincreasing, it has a finite limit. Therefore, (\ref{unext}), (\ref{sk}) and the Armijo rule in (\ref{Arm}) imply that
$
    \sigma s_k \|g(u_k)\|^2 \< \cE(u_k)-\cE(u_{k+1})\to 0$ as $k\to +\infty$. Hence, in view of $\sigma>0$,  it follows that
    \begin{equation}
    \label{sg0}
        \lim_{k\to +\infty} \big(s_k \|g(u_k)\|^2\big) = 0.
    \end{equation}
Now,  assume that the gradient descent sequence  $u_k$ has a limit point $u_* := \lim_{\cK\ni  k\to +\infty}u_k\in \mU_0$ such that $g(u_*)\ne 0$, where $\cK:= \{0\< k_1 < k_2 < \ldots\}$ is an infinite subset of nonnegative integers which specifies the respective subsequence of $u_k$. Then  the analyticity of the LQG cost on the open set  $\mU_0$ implies that
\begin{align}
\label{gpos}
    \lim_{\cK\ni  k\to +\infty}g(u_k) & = g(u_*)\ne 0,\\
\label{hpos}
    \lim_{\cK\ni  k\to +\infty} h_k
    & =
    \min\left(h_{\max},\, \frac{\|g(u_*)\|^2}{|\cD_{g}^2\cE(u_*)|}\right)>0,
\end{align}
where use is made of (\ref{hor}). Note that,
if $\cD_{g}^2\cE(u_*) = 0$, the limit in (\ref{hpos}) is equal to $h_{\max}>0$. A combination of (\ref{gpos}) with (\ref{sg0}) implies that
\begin{equation}
    \label{s0}
        \lim_{\cK\ni k\to +\infty} s_k = 0.
    \end{equation}
In turn, by combining (\ref{s0}) with (\ref{hpos}) and recalling (\ref{sk}) and the condition $0<f<1$, it follows that the indices
$
    j_p := \log_f \frac{s_{k_p}}{h_{k_p}} = \frac{\ln h_{k_p} - \ln s_{k_p}}{-\ln f}$
of the elements of the geometric progression in (\ref{skl}), which correspond to $k_p\in \cK$, diverge to infinity as
     $
    p\to +\infty
$,
and hence, $j_p\>1$ for all sufficiently large $p$. For all such $p$, the stepsize candidates $s_{k_p, j_p-1} = \frac{s_{k_p}}{f}$ do not pass the Armijo selection rule in (\ref{Arm}), that is,
            \begin{equation}
                \label{ineq:contra}
                \cE(u_k)-\cE\big(u_k-\frac{s_k}{f}g(u_k)\big) < \sigma \frac{s_k}{f}\|g(u_k)\|^2
            \end{equation}
for all sufficiently large $k \in \cK$.
Upon multiplying both parts of (\ref{ineq:contra}) by $\frac{f}{s_k}$ and taking the limit, this inequality leads to
\begin{align}
\nonumber
    \|g(u_*)\|^2
    & =
    \lim_{\cK \ni k\to +\infty}
    \Big(
        \frac{f}{s_k}
        \Big(
            \cE(u_k)-\cE\big(u_k-\frac{s_k}{f}g(u_k)\big)
        \Big)
    \Big)\\
\label{ggg}
    & \< \sigma \lim_{\cK\ni  k\to +\infty}\|g(u_k)\|^2 = \sigma \|g(u_*)\|^2,
\end{align}
where use is also made of (\ref{gpos}) and (\ref{s0}). However, since $\sigma <1$, the inequality in (\ref{ggg}) contradicts the assumption that $g(u_*)\ne 0$. This contradiction shows that any limit point $u_*\in \mU_0$ of the gradient descent sequence satisfies $g(u_*) = 0$.
%
        \end{proof}

    Note that the CQLQG control problem inherits a special type of symmetry from the LQG cost $\cE$ which is invariant under symplectic similarity transformations of the controller variables $\xi\mapsto \Sigma \xi$, with $\Sigma \in \mR^{n\x n}$ satisfying $\Sigma \Theta_2 \Sigma^{\rT} = \Theta_2$ and thus preserving the CCR matrix $\Theta_2$  (see, for example, \cite{VP_2011b,VP_2013a}). Hence,  the stationary points of the LQG cost are not isolated, which complicates the convergence rate analysis for the proposed gradient descent algorithm. This issue  is beyond the scope of the present study and will be addressed elsewhere by using more advanced analytic tools (such as in \cite{absil2009} and related references).

\section{A NUMERICAL EXAMPLE OF OPTIMAL CQLQG CONTROLLER DESIGN}\label{sec:example}

The gradient descent algorithm of Section~\ref{sec:Grad_Algo} was tested to find a locally optimal solution of the CQLQG control problem for a PR quantum plant in (\ref{x_y}) with dimensions $n=m_2=p_1=p_2=r=2$, $m_1 = 4$ and randomly generated state-space matrices $A$, $B$, $C$, $E$, satisfying the PR conditions (\ref{CCR11}), (\ref{CCR12_plant}), and the weighting matrices $F$, $G$ in (\ref{cZ}):
\begin{align*}
    A=&   {\begin{bmatrix}
                            0.9534 &  -1.1165\\
                           0.4193 &  1.8821
            \end{bmatrix}}\!,\ \ \qquad
    B=
    {\begin{bmatrix}
                      -1.7174  & -0.2189 &   1.9180&   0.5636\\
                       -0.6815 &  1.3570  &   0.2985  &  -0.3679
    \end{bmatrix}},\\
    C=&   {\begin{bmatrix}
                           -1.3570  &  -0.2189\\
                           -0.6815  &  1.7174
            \end{bmatrix}}\!,\qquad
    D=
             {\begin{bmatrix}
                    1 & 0 & 0 & 0\\
                    0 & 1 & 0 & 0
            \end{bmatrix}}\!,\qquad\qquad\quad
    E=      {\begin{bmatrix}
                         -0.3238 &  0.2779\\
                         -1.1693 &  -0.5966
            \end{bmatrix}},\\
    F= &      {\begin{bmatrix}
                        -0.8290 &  -0.9665\\
                        -1.8655&  -0.0357
            \end{bmatrix}}\!,\qquad
    G=      {\begin{bmatrix}
                        -0.2324 & -0.1608\\
                        -0.5822 & -1.0961
            \end{bmatrix}}\!,\qquad
    d=      {\begin{bmatrix}
                        1 & 0\\
                        0 & 1
            \end{bmatrix}}.
\end{align*}
This plant is unstable (the eigenvalues of the  matrix $A$ are $1.4177 \pm 0.5025 i$). The algorithm was run with parameters $h_{\max}=1$,
$f=0.5$, $\sigma = 0.9$, $\epsilon = {10}^{-6}$ in (\ref{hor})--(\ref{stop}) for $10$ randomly generated stabilizing PR controllers as initial points.
Starting from these points, it has taken $307$
to $2318$
steps for the algorithm to reach the fulfillment of the termination condition, with the average number of iterations being $1075$.
The local minimum value of the LQG cost is $\cE_{\min}=12.1026$
and is achieved at the following controller parameters:
\begin{align*}
    R=   {\begin{bmatrix}
                            -0.5611  &  -1.5567\\
                            -1.5567  & 1.8283
            \end{bmatrix}},
            \qquad
    b=   {\begin{bmatrix}
                            1.8111 &  0.7201\\
                           -1.4979 &  -3.9696
            \end{bmatrix}},
            \qquad
    e=      {\begin{bmatrix}
                            -0.1250  &  4.9673\\
                            -4.4929  &  -1.3387
            \end{bmatrix}}\!\!.
\end{align*}
The values of the LQG cost $\cE(u_k)$ for the gradient descent sequences $u_k$ are presented in Fig.~\ref{fig:NExp} in the form of semi-logarithmic graphs of $\frac{\cE(u_k)}{\cE_{\min}}-1$.
\begin{figure}[htbp]
    \centering
    \vskip-3mm    \includegraphics[width=9cm]{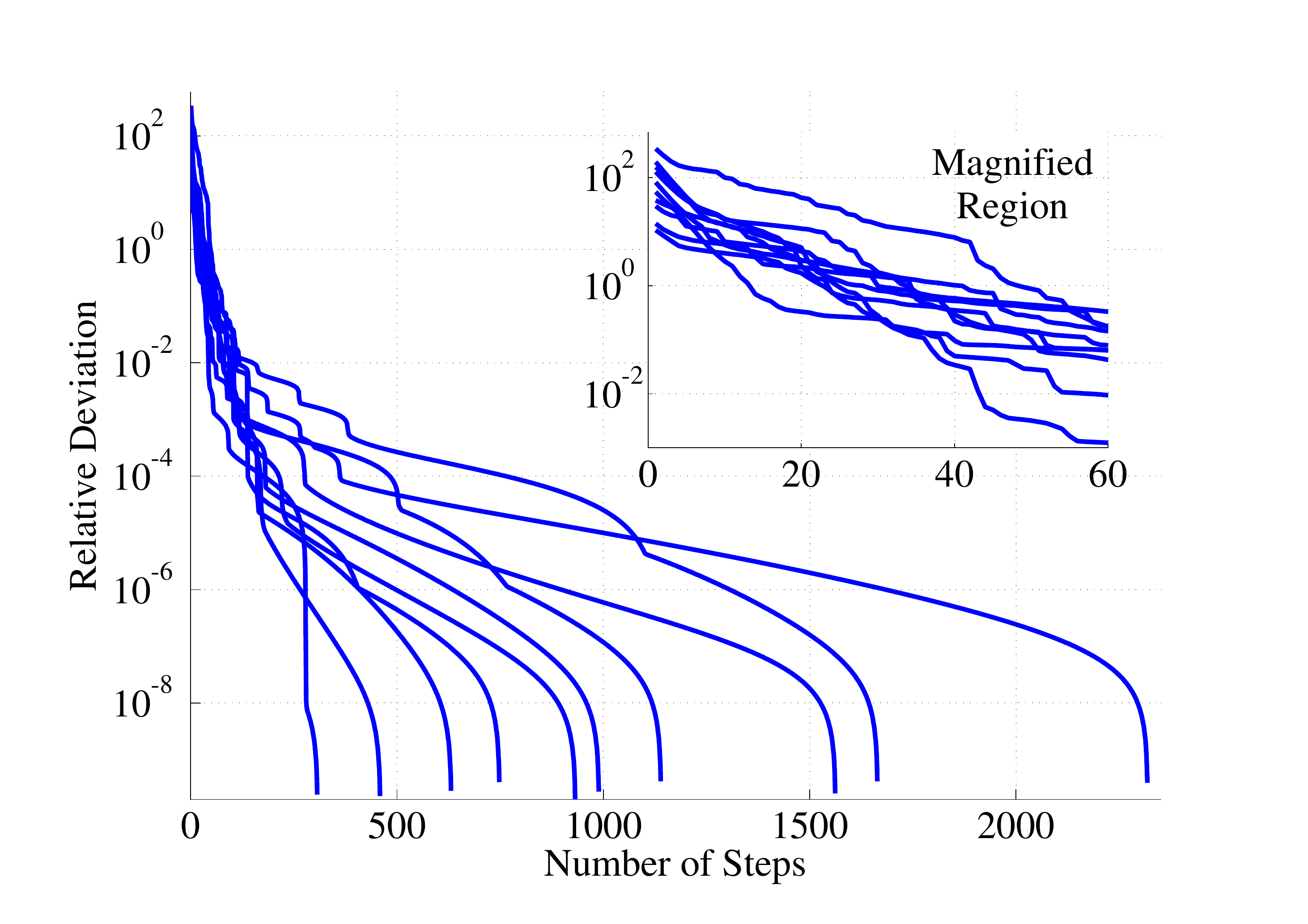}
    \vskip-3mm
    \caption{The relative deviations $\frac{\cE(u_k)}{\cE_{\min}}-1$ from the minimum value of the LQG cost  on a logarithmic scale versus the number of steps $k$.}\label{fig:NExp}
\end{figure}
These graphs are in qualitative agreement with the relatively slow linear convergence rate, typical for gradient descent methods. However, they also show that the proposed algorithm is fairly reliable, being able to cope with poor initial approximations where the LQG cost exceeds the minimum value by an order of magnitude.

\section{CONCLUSION}\label{sec:Conclusion}

A gradient descent algorithm has been developed for the numerical solution of the optimal CQLQG controller design problem, and its  convergence has been investigated. The algorithm has been tested and it appears to be fairly reliable in a numerical example with randomly generated stabilizing PR controllers  as initial points. The lack of a more systematic method for initialization and a relatively slow convergence rate are the main shortcomings of the algorithm. These issues are a subject for future research and will be tackled in subsequent publications.





%
%
%
%

\appendix
\renewcommand{\theequation}{\Alph{subsection}\arabic{equation}}

\subsection{Computing the Fr\'{e}chet derivative of the LQG cost}\label{app:FreDeriv}
\setcounter{equation}{0}

From the ALEs (\ref{PLyap}), (\ref{QLyap}) and the properties of the Frobenius inner product of matrices, it follows that the LQG cost $\cE$ in (\ref{cE}) is representable as
\begin{align}
    \label{eq:CQLQG_equi}
    \cE(u)
    =
    \frac{1}{2}\bra \cC^{\rT} \cC, P \ket
    =
    \frac{1}{2} \bra Q, \cB \cB^{\rT} \ket
    =
    - \bra H, \cA \ket,
\end{align}
where $H$ is the Hankelian given by (\ref{H}).
In what follows, $\delta(\cdot)$ denotes the first variation, and $\delta_X(\cdot)$ is the first variation with respect to an independent matrix-valued variable $X$. Since the matrix $R$ influences the LQG cost $\cE$ in (\ref{eq:CQLQG_equi}) only through the controller matrix $a$ in (\ref{a}) which enters the matrix $\cA$ in (\ref{cABC}),  then
\begin{align}
\nonumber
    \delta_R  \cE
     =&
    \frac{1}{2}
    \bra
        \cC^{\rT}\cC,
        \delta_R P
    \ket
    =
    -
    \frac{1}{2}
    \bra
        \cA^{\rT}Q+Q\cA,
        \delta_R P
    \ket\\
\nonumber
    =&
     -
    \frac{1}{2}
    \bra
        Q,
        \cA \delta_R P  + (\delta_R P)\cA^{\rT}
    \ket
     =
    \frac{1}{2}
    \bra
        Q,
        (\delta_R \cA) P  +  P\delta_R \cA^{\rT}
    \ket\\
\label{Aa}
    = &
    \bra
        H, \delta_R \cA
    \ket
    = \bra H_{22} , \delta_R a \ket
    =
    \bra H_{22} , 2 \Theta_2 \delta R \ket.
 \end{align}
 Now, since $R \in \mS_n$, and the subspaces $\mS_n$ and  $\mA_n$ are orthogonal, then the first variation of the LQG cost in (\ref{Aa}) takes the form
$
    \delta_R \cE
    = 2 \bra \Theta_2^{\rT} H_{22} , \delta R \ket
                    = -2 \bra \sym (\Theta_2 H_{22}) , \delta R \ket
$,
which, by the definition of the Fr\'{e}chet derivative, establishes (\ref{eq:frech_der_R}).
By a similar reasoning, the first variations of the LQG cost with respect to $b$ and $e$ are as follows:
\begin{align*}
    \delta_b \cE =&  \bra H , \delta_b \cA \ket + \bra Q \cB , \delta \cB \ket + \bra \cC P, \delta_b \cC \ket\\
                    =&  \bra H_{22} , \delta_b a \ket + \bra E^{\rT} H_{12} , \delta_b c \ket + \bra (Q \cB)_{22} , \delta b \ket \\
                     & + \bra G^{\rT} F P_{12} + G^{\rT} G c P_{22}, \delta_b c \ket \\
                    =& \bra -\asym(H_{22} \Theta_2^{-1})bJ_2 + Q_{21}Ed+ Q_{22}b \\
                     &  -  \Theta_2^{-1} ( H_{12}^{\rT} E + P_{21} F^{\rT} G + P_{22} c^{\rT} G^{\rT} G) d J_2 , \delta b \ket,\\
    \delta_e \cE =& \bra H_{21} , \delta e \ C \ket + \bra H_{22} , \delta_e a \ket + \frac{1}{2} \bra Q , \delta_e (\cB \cB^{\rT}) \ket \\
                    =& \bra H_{21} C^{\rT} , \delta e \ket + \bra H_{22}\Theta^{-1} , \asym(\delta e DJ_1 D^{\rT} e^{\rT}) \ket \\
                    &  +  \bra Q \cB , \delta_e \cB \ket \\
                    =& \bra H_{21} C^{\rT} - \asym(H_{22}\Theta^{-1})eDJ_1D^{\rT} \\
                    &  +(Q_{21}B+ Q_{22}eD)D^{\rT}, \delta e \ket,
\end{align*}
which leads to (\ref{eq:frech_der_b}) and (\ref{eq:frech_der_e}) in view of (\ref{DDdd}) and (\ref{psi_chi}).
\subsection{Computing the second-order G\^{a}teaux derivative of the LQG cost}\label{app:sec_Gateaux}
\setcounter{equation}{0}

The first-order G\^{a}teaux derivative of the LQG cost $\cE$ along the gradient $g$ in (\ref{grad}) is expressed in terms of the first-order Fr\'{e}chet derivatives from (\ref{eq:frech_der_R})--(\ref{eq:frech_der_e}) by using (\ref{gg}),
provided $u \in \mU_0$.
Hence, the second-order G\^{a}teaux derivative along the gradient can be computed as
\begin{align}
   \nonumber
   &\cD_g^2 \cE=  \cD_g (\|g\|^2) = 2\bra \cD_g g,g \ket\\
\label{eq:2nd_Gateaux}
   =& 2 (\bra \cD_g\d_R\cE,\d_R\cE \ket
                   +  \bra \cD_g\d_b\cE,\d_b\cE \ket
                +  \bra \cD_g\d_e\cE,\d_e\cE \ket) ,
\end{align}
where
\begin{align}
\label{cal:d2R}
    \cD_g\d_R\cE =&  -2 \sym(\Theta_2 \cD_g H_{22}),\\
\nonumber
     \cD_g\d_b\cE
     =&
     \cD_g Q_{21}Ed + \cD_g(Q_{22}b) \\
\label{cal:d2b}
     &  - \cD_g(\psi b) J_2 - \cD_g\chi d J_2,\\
\nonumber
    \cD_g\d_e\cE=&
    \cD_gH_{21}C^{\rT} + \cD_gQ_{21} BD^{\rT} \\
\label{cal:d2e}
    &  + \cD_g(Q_{22} e) - \cD_g(\psi e ) D J_1 D^{\rT}.
\end{align}
The second-order G\^{a}teaux derivative in (\ref{eq:2nd_Gateaux}) can now be computed by using (\ref{cal:d2R})--(\ref{cal:d2e}),
the Leibniz product rule
and the first-order G\^{a}teaux derivatives of $P$, $Q$.
By differentiating both sides of the ALE (\ref{PLyap})  and its dual (\ref{QLyap}), it follows that the matrices $\cD_g P$ and $\cD_g Q$ are unique solutions of the ALEs
\begin{align*}
    \cA \cD_g P + \cD_g P \cA^{\rT} + 2\sym(\cD_g \cA P + \cD_g \cB \cB^{\rT})& =0,      \\
    \cA^{\rT} \cD_g Q + \cD_g Q  \cA + 2\sym(\cA^{\rT} \cD_g Q + \cC^{\rT} \cD_g \cC)& =0.
\end{align*}
Here, in view of (\ref{cABC}), (\ref{a}), (\ref{c}) and  the relation $\cD_g u = g$,
\begin{align*}
    \cD_g \cA &= {\begin{bmatrix}
                    0  &  -E d J_2 \d_b \cE^{\rT} \Theta_2^{-1}\\
                    \d_e \cE C  & \cD_g a
                 \end{bmatrix}},\\ \nonumber
    \cD_g \cB &= {\begin{bmatrix}
                    0           & 0 \\
                    \d_e \cE D  & \d_b \cE
                 \end{bmatrix}},\\
    \cD_g \cC &= {\begin{bmatrix}
                     0 & -G d J_2 \d_b \cE^{\rT} \Theta_2^{-1}
                 \end{bmatrix}},
\end{align*}
with
$$
    \cD_g a = 2 \Theta_2 \d_R \cE - \asym (\d_e \cE D J_1 D^{\rT} e^{\rT} + \d_b \cE J_2 b^{\rT})\Theta_2^{-1}.
$$
\end{document}